\documentclass[10pt,twocolumn,a4paper]{revtex4}

\usepackage{graphicx}
\usepackage{amsmath,amssymb,amsthm,mathtools}
\usepackage{enumerate}
\usepackage{braket}
\usepackage{subfigure}
\usepackage{quantikz}
\usepackage{tikz-cd}
\usepackage[normalem]{ulem}
\usetikzlibrary{arrows.meta,calc,decorations.pathreplacing,fit,positioning,shapes.misc,shapes.symbols}
\usepackage{hyperref}

\usepackage{complexity}

\mathtoolsset{showonlyrefs,showmanualtags}

\newtheorem{theorem}{Theorem}[]
\newtheorem{lemma}[theorem]{Lemma}

\newcommand{\cC}{\mathcal{C}}

\newcommand{\post}{\mathsf{post}}
\newcommand{\prep}{\mathsf{prep}}
\newcommand{\FToutput}{y}

\date{\today}

\begin{document}

\title{Unconditional quantum advantage with \\
noisy planar architectures}

\author{Libor Caha}
\author{Robert Koenig}
\author{Louis Paletta}

\affiliation{Department of Mathematics, Technical University of Munich, 85748 Garching, Germany}
\affiliation{Munich Center for Quantum Science and Technology (MCQST), 80799 M\"unchen, Germany}

\begin{abstract}
We consider  quantum devices restricted to local operations in $2D$ and subject to local stochastic noise below a constant threshold. 
We show that such circuits are computationally  more powerful than $\AC^0$-circuits, i.e., 
noise-free, geometrically-unconstrained constant-depth  classical circuits with unbounded fan-in AND, OR and NOT gates. To this end, we exhibit a computational problem with the following properties: 
(i) Any instance of the problem is correctly  solved with high probability by a 
certain geometrically $2D$-local quantum circuit even if the latter is imperfectly implemented, but 
(ii) any polynomial-size $\AC^0$-circuit fails to solve certain instances of the problem with constant probability. To our knowledge, this is
the first complexity-theoretic separation which applies to planar quantum devices, incorporates noise-resilience and is unconditional, i.e., does not rely on complexity-theoretic assumptions.    This brings the experimental demonstration of an 
unconditional quantum advantage closer to experimental realities.
\end{abstract}
\maketitle

\textit{Introduction.---} In the passage from mathematical abstraction to physical realization, the elegance of theoretical quantum computation meets stubborn physical realities in its experimental realization. Quantum information is continuously eroded by system-environment interactions, crosstalk, and control imperfections that corrupt state preparation, measurement, and unitary operations. Suitable fault-tolerance mechanisms must balance operating on encoded information against maintaining sufficient isolation and redundancy to protect it~\cite{shor1996fault,steane1996error,knill1998resilient,AharonovBenOr}. Compounding this challenge, dominant experimental platforms -- superconducting qubit arrays, quantum optical repeater networks -- confine qubits to $2D$ grids with operations restricted to nearest neighbors~\cite{arute2019quantum}. 
Against these obstacles, a central question emerges: does a purported quantum advantage, such as a computational speedup, ``survive the implementation stack'' -- remaining observable once the resource costs of fault-tolerance and geometric locality are accounted for? 
More concretely, can computations implemented by noisy quantum architectures which are local in $2D$  exhibit an advantage over comparable classical computations? Here we target an unconditional separation~\cite{bravyi2020quantum,caha_3d-local_2026} whose only premise is the validity of quantum
mechanics: we make no complexity-theoretic assumptions.

\textit{Main result.---} Our main result is an affirmative answer to this question: We show that shallow $2D$-local quantum circuits are computationally strictly more powerful than classical shallow circuits, even when the quantum circuit is subject to realistic physical noise.

More precisely, we define a computational problem of size $n$ that can be solved by a family of quantum circuits on $\mathsf{poly}(n)$ qubits arranged on a planar grid. The circuits are $2D$-local and have constant depth $D=O(1)$.
In other words,  they are specified by a unitary $U=L_1\cdots L_{D}$ with each layer~$L_t$ a product of nearest-neighbor two-qubit gates on disjoint neighboring pairs of qubits. Conversely, we prove that this problem is out of reach for  polynomial-size constant-depth classical circuits with unbounded fan-in AND, OR and unary NOT-gates, a circuit class referred to as $\AC^0$. Importantly, this advantage persists in the presence of noise, provided that the noise strength  remains below a constant fault-tolerance threshold. Specifically, the result holds under the general and physically motivated local stochastic noise model~\cite{gottesman2014faulttolerantquantumcomputationconstant}. The latter encompasses simple noise models such as independent depolarizing noise, but also allows to describe  spatial and temporal correlations, provided that error probabilities decay exponentially with error weight.

The computational problem of interest will be a relation problem, specified by a relation, that is, a subset $R_n\subset \{0,1\}^n\times \{0,1\}^n$.
An element $x\in \pi(R_n)$ obtained by 
taking the first component of a pair~$(x,y)\in R_n$ is called an {\em instance} of the problem.  A (classical) algorithm~$\cC$ taking an instance~$x$ as input is said to solve the problem if its output~$\cC(x)\in \{0,1\}^n$ satisfies $(x,\cC(x))\in R_n$. 
We often consider the (average) success probability of an algorithm~$\cC$ solving the problem for a uniformly random instance. Analogous definitions apply to quantum circuits:
 here the classical input~$x$  parametrizes a (classically controlled) unitary circuit $U(x)$ of size polynomial in the input size~$n$. Importantly, the dependence of the circuit $U(x)$ on~$x$ is constrained: we assume a common circuit layout independent of~$x$, where each  individual gate is
 controlled by at most a constant number of bits of~$x$.
The output of such a circuit is defined by the (marginal) distribution of measurement outcomes when 
 starting from the all-zero state, applying $U(x)$, and measuring in the computational basis. We call such a circuit a (classically controlled) prepare-and-measure circuit.

\begin{theorem}[Quantum advantage with noisy shallow planar circuits against unbounded fan-in classical circuits]\label{thm:maintheorem}
There exists a constant threshold $p_{\rm th} >0$ such that the following holds. There exists a  relation problem $R_n^{\mathsf{FT}}\subset \{0,1\}^{n}\times \{0,1\}^{n}$, a set of instances~$S_n$ 
and a constant $\alpha>0$, such that:
\begin{enumerate}[(i)]
\item there exists a uniform $2D$-local classically controlled constant-depth polynomial-size quantum circuit which, under any local stochastic Pauli noise of strength at most $p_{\rm th}$, 
solves the problem for a uniformly random instance from $S_n$ with probability at least $2/3$, but 
\item any (non-uniform) classical circuit composed of unbounded fan-in AND/OR as well as NOT gates (i.e., $\AC^0$), of depth $D$ solving the problem with average success probability at least $1/3$ has size at least $\exp(n^{1/(\alpha  D)})$.
\end{enumerate}
\end{theorem}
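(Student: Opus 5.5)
The plan is to combine three known ingredients: the Bravyi--Gosset--K\"onig--Tomamichel (BGKT) style separation between shallow quantum circuits and $\AC^0$, a fault-tolerant encoding of that separation, and a reduction that makes the encoded circuit planar. For the starting relation problem I would take a parity-halving or cycle-type game problem $R_n$ on a grid-like graph. For such problems there is a constant-depth quantum circuit that prepares a graph state and measures in $x$-dependent Pauli bases, and it solves every instance with certainty. There is also a known $\AC^0$ lower bound: any depth-$D$ circuit succeeding with probability at least $1/3$ on random instances needs size $\exp(n^{1/(\alpha D)})$. That bound comes from the Håstad switching lemma applied to a restriction argument that reduces to small parity-type games with classical winning probability strictly below $1$.

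Next I would define $R_n^{\mathsf{FT}}$ by composing $R_n$ with a classical post-processing. Following the established approach for noisy 3D architectures, encode each logical qubit in the surface code (planar, so the logical graph state becomes a 2D cluster-type state). Implement the Clifford part transversally or by lattice surgery in constant depth, and measure every physical qubit in the $X$, $Y$ or $Z$ basis. The relation $R_n^{\mathsf{FT}}$ accepts an output $y$ if a decoding map $\mathsf{dec}(y)$ lies in $R_n(x)$. Two difficulties arise here. First, decoding is not in $\AC^0$. Second, in 2D one cannot build a 3D Raussendorf cluster for single-shot fault tolerance in constant depth. To handle the second, I would instead use a 2D surface-code patch together with repeated syndrome rounds folded into a polylog- or poly-size spatial layout: the time direction of a $d\times d\times d$ cluster is laid out spatially using constant-depth teleportation chains. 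Each Bell pair along such a chain is prepared in constant depth, and the resulting Pauli frame corrections become classical parities absorbed into the relation. The relation then asks for the measurement record itself (a "noisy witness"). Its correctness condition is that minimum-weight/threshold decoding of the record, followed by the frame correction, yields a valid output. This is part (i): below threshold, local stochastic noise gives a decoding failure probability of $e^{-\Omega(d)}$ per block, and a union bound over $\mathrm{poly}(n)$ blocks with $d=\Theta(\log n)$ gives success at least $2/3$ on instances from $S_n$. Here $S_n$ consists of the instances whose logical structure matches the BGKT instances.

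For part (ii), I would show that any $\AC^0$ circuit solving $R_n^{\mathsf{FT}}$ yields one solving the original problem with comparable size and depth. The main obstacle is that decoding is not in $\AC^0$. My resolution is to restrict to noise-free-consistent outputs: restrict attention to input-independent gauge choices, and note that for the classical lower bound we may fix the syndrome bits. Any accepted $y$ decodes to a logical string obtained from $y$ by parities of $O(d)=O(\log n)$ bits. A parity of $\log n$ bits has a depth-2 $\AC^0$ circuit of size $\mathrm{poly}(n)$. Composing therefore gives an $\AC^0$ circuit of depth $D+O(1)$ and polynomial overhead solving $R_{n'}$ with $n'=n^{\Omega(1)}$. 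Plugging this into the lower bound and adjusting $\alpha$ yields $\exp(n^{1/(\alpha D)})$.

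The delicate step I expect is ensuring that decoding of valid codewords is indeed given by low-weight parities. The logical operators must be chosen as short strings, which requires patches of size $O(\log n)$ and logical readout along a single row. I would also need to check that the $\AC^0$ solver cannot exploit high-weight "errors" that cheat the decoder. I would handle this by defining the relation so that it demands a zero-syndrome record up to the decoder, making the logical value a fixed parity.
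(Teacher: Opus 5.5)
Your proposal has genuine gaps in both parts.

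\textbf{Part (i): the construction is not $2D$-local with a constant threshold.} Surface-code patches are already two-dimensional. Running your grid-like logical circuit fault-tolerantly therefore has a three-dimensional spacetime, and unfolding time into space needs a third spatial dimension. Lattice surgery also needs $\Theta(d)$ syndrome rounds, so it is not constant depth as written.

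Your fix is to lay out $d\times d\times d$ cluster blocks in the plane using teleportation chains, but the chains cannot stay short. A radius-$r$ ball of a $3D$ lattice has $\sim r^3$ sites, while its image under a planar embedding with dilation $\delta$ fits into $\sim (r\delta)^2$ sites, which forces chains of length $\Omega(\sqrt{d})$. Under local stochastic noise of strength $p$, a chain of length $\ell$ corrupts the transmitted qubit with probability $\approx \ell p$. The effective noise on the simulated cluster therefore grows with $d=\Theta(\log n)$ and eventually exceeds any fixed threshold. Keeping $d$ constant instead breaks your union bound over $\mathrm{poly}(n)$ blocks.

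The missing idea is to do the dimension count the other way around:
\begin{enumerate}
\item Start from a \emph{$1D$-local} logical Clifford circuit. The paper uses the single-qubit gate teleportation problem, whose $\AC^0$-hardness is already established.
\item Encode it with a scheme whose physical layout for a line of logical qubits is itself a line: the concatenated Steane code on a bilinear array.
\item Use gate teleportation so every physical qubit lives only $O(1)$ time steps.
\item Unfold time into the one remaining spatial direction.
\end{enumerate}
All gates then have bounded range, so local stochastic noise stays local stochastic and the $1D$ threshold carries over.

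\textbf{Part (ii): your resolution of the decoding problem is inconsistent with (i).} The relation must accept the records the noisy device actually produces. Under, e.g., i.i.d.\ depolarizing noise below threshold, a $\mathrm{poly}(n)$-bit record has nonzero syndrome with probability tending to one, so a relation demanding zero syndrome is violated. If the relation instead only asks that the decoder-corrected record be valid, the logical value is the parity of $y$ plus a decoder-chosen correction. That correction depends on the whole syndrome, so the logical value is not a fixed $O(\log n)$-bit parity of $y$. You also cannot ``fix the syndrome bits'' in the lower bound. It quantifies over all $\AC^0$ circuits, and the reduction must evaluate the decoder on whatever record the adversary outputs.

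What makes the paper's reduction work is that the \emph{entire} postprocessing map is local on arbitrary inputs:
\begin{itemize}
\item With $L=O(\log\log n)$ concatenation levels, each gadget emits $\exp(O(L))=\mathrm{polylog}(n)$ bits and is decoded hierarchically from its own outcomes.
\item The Pauli frame on an output block depends only on the $O(D^2)$ locations in its backward lightcone in the $1D$ logical circuit.
\item Hence each output bit of $\post$ depends on $k=D^2\,\mathrm{polylog}(n)$ bits and has a DNF of size $2^{O(k)}=\exp(\mathrm{polylog}(n))$.
\item Composing with the solver therefore costs $O(1)$ depth and quasi-polynomial size, which the $\exp(n^{1/(\beta(D+O(1)))})$ lower bound for $R_n$ absorbs.
\end{itemize}
No restriction on syndromes is needed. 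Your observation that $\log n$-bit parities have small DNFs is the right kind of idea. It has to be applied, however, to a decoder that is polylog-local on \emph{all} records, which requires choosing the code for exactly that property.
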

\noindent We note that the scalars $1/3,2/3$ can be replaced by any pair of constants $0<\mu<\nu<1$ by amplification.

Theorem~\ref{thm:maintheorem} establishes a trade-off bound between circuit size and depth for classical circuits with unbounded fan-in AND/OR as well as NOT gates solving the relation problem of interest. In particular, it establishes an unconditional complexity-theoretic separation between the power of  $\AC^0$-circuits and noisy shallow $2D$-local quantum circuits.

The novelty of Theorem~\ref{thm:maintheorem} lies in the fact that the quantum advantage is manifested by a $2D$-local quantum circuit, providing an experimentally realistic circuit structure compatible with planar quantum computing architectures. In contrast, earlier proposals demonstrating an advantage of noisy shallow circuits (both over $\mathsf{NC}^0$- and  $\AC^0$-circuits) required at least $3D$-local quantum circuits~\cite{bravyi2020quantum,caha_3d-local_2026} or even circuits with non-local gates~\cite{grier2021interactivequantumadvantagenoisy}

We start by a brief overview of the construction. The starting point for our quantum advantage proposal (expressed by  Theorem~\ref{thm:maintheorem}) is a  $1D$-local constant-depth quantum circuit~$\mathsf{C}_{\mathsf{1D}}$ which exhibits a quantum advantage against  $\AC^0$-circuits. An example of this kind of circuit was given in  Ref.~\cite{caha_3d-local_2026,caha2024single}; the circuit is $1D$-local, but not fault-tolerant. (This result is a strengthening of earlier work establishing separations against $\NC^0$-circuits~\cite{BGK}, and improves the locality of the quantum circuit from $2D$~\cite{watts2019exponential} to~$1D$.)

The next step is to realize the circuit~$\mathsf{C}_{\mathsf{1D}}$ in a fault-tolerant manner.  To this end, we use the 
construction of~\cite{harleykoenig2026fault}: given a $1D$-local Clifford circuit~$\mathsf{C}_{\mathsf{1D}}$, it produces a fault-tolerant constant-depth circuit~$\mathsf{C}^{\mathsf{FT}}_{\mathsf{2D}}$ realizing it using $2D$-local operations. In more detail, its measurement outcomes can be post-processed by efficient classical computation to reproduce the behavior of the ideal circuit~$\mathsf{C}_{\mathsf{1D}}$; in the presence of noise, this simulation is approximate but with a controlled error in variation distance.

The constructed circuit~$\mathsf{C}^{\mathsf{FT}}_{\mathsf{2D}}$
is fault-tolerant, $2D$-local and solves the original relation problem when combined with classical post-processing. However,
although efficiently (i.e., polynomial-time) computable, this post-processing step  potentially adds additional power to the quantum circuit. This means it cannot be ignored when trying to establish a quantum advantage. 

To address this issue, we follow the 
work of Ref.~\cite{bravyi2020quantum} folding the postprocessing into the definition of a new (derived) computational problem. Roughly, this is a new relation problem where for a given input, a valid output is one for which applying the postprocessing map results in a solution to the original problem.

By definition, the constructed $2D$-local quantum circuit~$\mathsf{C}^{\mathsf{FT}}_{\mathsf{2D}}$ (with postprocessing omitted) fault-tolerantly solves the derived relation problem. It remains to show that this derived problem -- like the original relation problem -- is also computationally hard for $\AC^0$-circuits. To this end, we analyze the input-output dependencies of the postprocessing map and show that it preserves $\AC^0$-hardness.

The precise fault-tolerance construction and the proof of the main theorem follow.

\textit{2D-local fault-tolerance in constant depth.---}
Consider a prepare-and-measure  circuit~$\mathsf{C}_{\mathsf{1D}}$ composed of geometrically $1D$-local (classically controlled) Clifford gates. The following construction provides a
fault-tolerant, $2D$-local constant-depth Clifford circuit  which -- when combined with efficient classical postprocessing -- emulates~$\mathsf{C}_{\mathsf{1D}}$.

\begin{lemma}\label{lem:2DFT}
There exists a constant threshold $p_{\rm th}$ such that the following holds. 
Consider a $1D$-local prepare-and-measure Clifford circuit~$\mathsf{C}_{\mathsf{1D}}$ of depth $D$, parametrized by a string~$x\in \{0,1\}^n$ acting on $n_q(n)$ qubits. For each input $x\in \{0,1\}^n$, let $p(\cdot |x)$ denote the corresponding output distribution on $\{0,1\}^{n_q(n)}$.

Then, for every constant $\varepsilon>0$, there exists a $2D$-local prepare-and-measure Clifford circuit~$\mathsf{C}^{\mathsf{FT}}_{\mathsf{2D}}$ 
of constant depth on $n_q'(n)=n_q(n) D\cdot \mathsf{poly}(\log (n_q(n)D))$ qubits,
 and a 
map $\post: \{0,1\}^{n}\times\{0,1\}^{n_q'(n)} \to \{0,1\}^{n_q(n)}$,
with the following properties.
\begin{enumerate}[(i)]
\item \label{it:mainone}
Consider  a noisy implementation of~$\mathsf{C}^{\mathsf{FT}}_{\mathsf{2D}}$ with under any stochastic Pauli noise of strength at most $p_{\rm th}$. For an input~$x$, let $y\in \{0,1\}^{n_q'(n)}$~denote the measurement outcomes obtained from a noisy implementation. Then the distribution of the post-processed output $\post(x,\FToutput)\in \{0,1\}^{n_q(n)}$ is within total variation distance at most~$\varepsilon$ from $p(\cdot |x)$. 
\item \label{it:maintwo}
The map $\post$ can be implemented by a non-uniform $\AC^0$-circuit of size $\exp(D^2 \poly(\log (n_q(n) D)))$.
\end{enumerate}
\end{lemma}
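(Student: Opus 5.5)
The plan is to instantiate the construction of~\cite{harleykoenig2026fault} and then do a careful dependency analysis of its decoder. First, we convert $\mathsf{C}_{\mathsf{1D}}$ into a measurement-based form. Each Clifford layer of the $1D$-local circuit is implemented by teleportation through a $2D$ cluster-like resource state: one spatial dimension indexes the $n_q(n)$ qubits, and the other indexes the $D$ time steps. This yields a $2D$ array of $n_q(n)D$ logical sites.

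Next, every logical qubit of this resource state is encoded in a surface-code patch of distance $d=\Theta(\mathsf{poly}\log(n_q(n)D))$. The patches are arranged so that the encoded resource-state preparation and the single-patch measurements are $2D$-local and of constant depth. Concretely, we use transversal or lattice-surgery-free gates: constant-depth preparation of a $3D$ cluster state folded into $2D$ with local, measurement-based error correction (Raussendorf-type), as in~\cite{harleykoenig2026fault}. This accounts for the qubit count $n_q(n)D\cdot\mathsf{poly}(\log(n_q(n)D))$.

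The classical control by $x$ enters only through Pauli-frame-compatible choices of local Cliffords. Each physical gate therefore still depends on $O(1)$ bits of $x$.

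For property~(i), the map $\post$ proceeds in four steps:
\begin{enumerate}[(a)]
\item it decodes each code block from the syndrome information in $y$ (e.g.\ minimum-weight matching restricted to a block and its neighbours);
\item it obtains the logical measurement outcomes;
\item it computes the Pauli-frame corrections induced by teleportation;
\item it outputs the corrected logical outcomes of the final layer.
\end{enumerate}
Standard threshold arguments for local stochastic noise (percolation or counting of connected error clusters) show that each block fails with probability $\exp(-\Omega(d))$. Choosing $d=C\log(n_q(n)D/\varepsilon)$ and applying a union bound over the $n_q(n)D$ blocks, the probability that any logical error occurs is at most $\varepsilon$. On the complementary event, the output equals a sample from $p(\cdot|x)$, so the total variation distance is at most $\varepsilon$.

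For property~(ii), we analyze the locality of $\post$. Decoding a single block depends only on $\mathsf{poly}(d)$ bits, so any Boolean function of these bits, including the decoder, has a DNF of size $2^{\mathsf{poly}(d)}=\exp(\mathsf{poly}\log(n_q(n)D))$ and depth $2$. The Pauli-frame correction at output qubit $j$ is a parity (over $\mathbb{F}_2$) of decoded logical outcomes from its backward light cone in the teleportation network. Because $\mathsf{C}_{\mathsf{1D}}$ is $1D$-local of depth $D$, Clifford propagation of byproduct operators through $D$ layers makes this light cone contain $O(D^2)$ sites: width $O(D)$ times $D$ layers. The correction is therefore a function of $O(D^2)\cdot\mathsf{poly}(d)$ input bits, together with $O(D^2)$ bits of $x$ that determine the Cliffords.

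Any function of $m$ bits has a depth-$2$ circuit of size $2^{m}$. Taking $m=D^2\mathsf{poly}(\log(n_q(n)D))$ gives a non-uniform $\AC^0$ circuit of size $\exp(D^2\mathsf{poly}\log(n_q(n)D))$ for each output bit, and taking the union over all outputs only multiplies the size by a polynomial factor.

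The main obstacle I expect is in~(ii): making sure that the decoder is genuinely \emph{local}. A global matching decoder would make each logical outcome depend on all of $y$. I would address this by using a decoder that acts on each block separately, with boundaries fixed by the code structure: each logical measurement is read out from its own block's syndrome, and the threshold argument is run per block. This keeps the dependency set of each output bit at size $D^2\mathsf{poly}\log$ while preserving the $\varepsilon$-bound in~(i).
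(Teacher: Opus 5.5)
There is a genuine gap in your construction, at exactly the point the lemma is about: achieving $2D$-locality, constant depth and a threshold simultaneously. Your treatment of (ii) matches the paper's: an $O(D^2)$ backward lightcone times the per-block dependence, then a DNF for each output bit. The problem is the circuit feeding into it.

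\begin{itemize}
\item Surface-code patches of growing distance $d=\Theta(\log(n_qD/\varepsilon))$ need two spatial directions of their own, since a $1D$-local stabilizer code has bounded distance.
\item Fault-tolerant syndrome extraction with noisy measurements needs $\Omega(d)$ rounds. In constant depth these rounds must become a further spatial direction, so the Raussendorf lattice of even one logical qubit contains a cubic grid of side $\Theta(d)$.
\item Since $d\to\infty$, this lattice cannot be \emph{folded into $2D$} with bounded-range gates. An $N$-vertex cubic grid has balanced separators of size $\Omega(N^{2/3})$, whereas a graph laid out in the plane with bounded density and bounded edge length has separators of size $O(\sqrt N)$.
\item Replacing the long-range links by measurement-based wires does not help. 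The effective error rate of a wire grows with its length, so the folded cluster no longer sees local stochastic noise below a fixed threshold.
\end{itemize}

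This is precisely why the surface-code/cluster-state construction of~\cite{bravyi2020quantum} is $3D$-local. The same objection applies to your tiling of $n_q\times D$ patches joined by transversal gates. Corresponding qubits of neighbouring patches sit at distance $\sim d$, and each patch still needs a fault-tolerant preparation. Moreover, transversal gates would not cover a general Clifford circuit, because the surface code has no transversal $S$.

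The missing idea is to start from a fault-tolerance scheme that is itself $1D$-local. The paper uses concatenated Steane codes on a bilinear line~\cite{stephens2007universal,AharonovBenOr} with $L=O(\log\log(n_qD/\varepsilon))$ levels, following~\cite{harleykoenig2026fault}. Its space-time is two-dimensional. You drop all adaptive Pauli corrections and track the frame classically. You use gate teleportation so that every qubit is alive only for constant time. The time direction can then be unfolded into the second spatial direction, giving a $2D$-local constant-depth Clifford circuit on $n_qD\exp(O(L))=n_qD\,\mathsf{poly}(\log(n_qD))$ qubits.

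This choice also removes the obstacle you anticipate in~(ii). The Steane code has a transversal Clifford group. Each gadget's error correction uses only that gadget's outcomes, and the logical readout $\mathsf{Read}^{(L)}$ is a function of a single block. Hence $\post$ is automatically $k$-local with $k\le O(D^2)\exp(O(L))$.

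By contrast, you assert without proof that a decoder confined to one surface-code block (or its neighbours) fails with probability $\exp(-\Omega(d))$. Error chains in an entangled cluster cross block boundaries, so this would need a separate windowed-decoding argument that the proposal does not supply.
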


\begin{figure}
\centering
\includegraphics[width=\columnwidth]{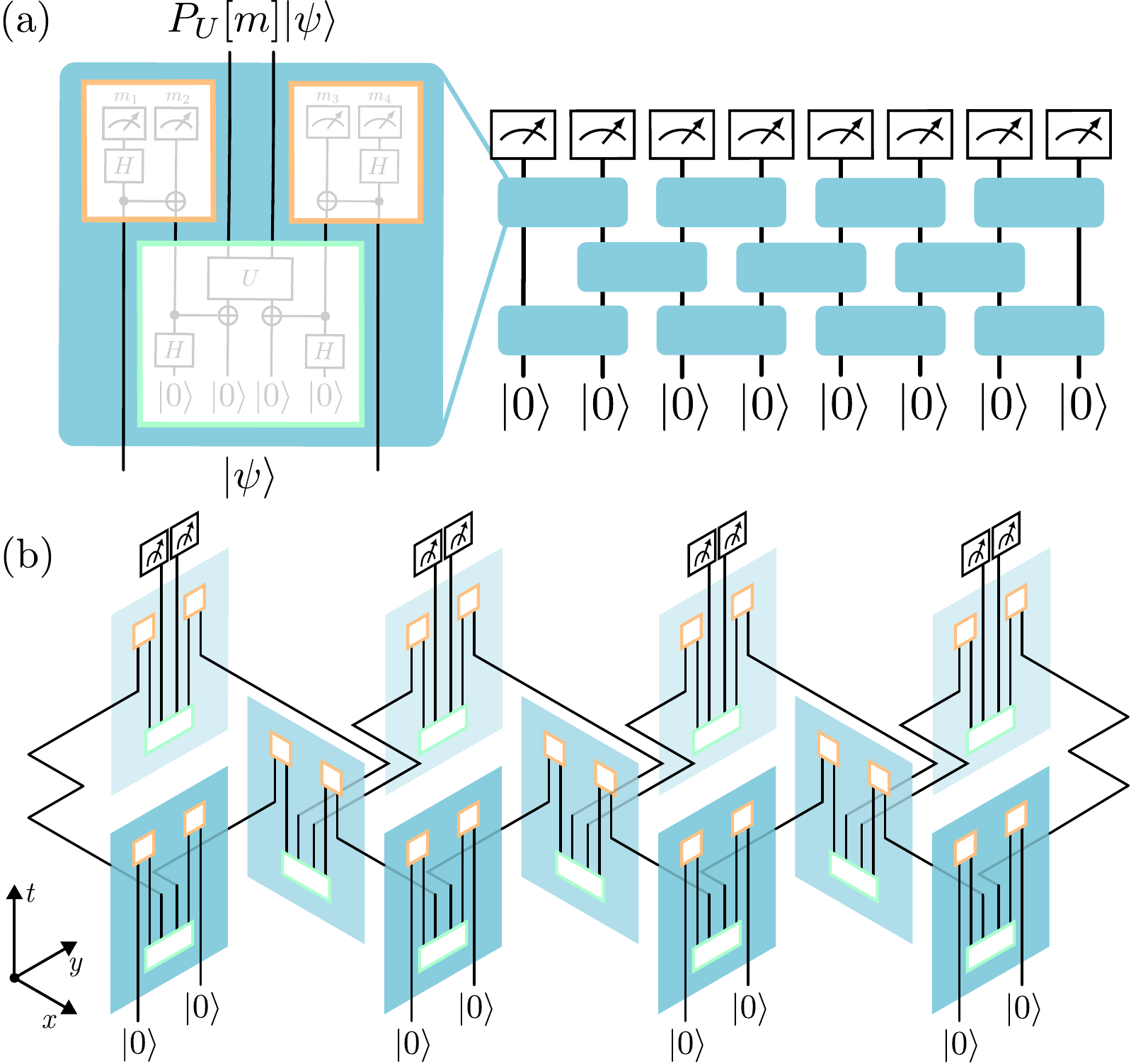}
\caption{Parallelization of a $1D$-local circuit $\mathsf{C}_{\mathsf{1D},*}^{\mathsf{FT}}$: (a) a 2-qubit gate $U$ is recompiled by a gate-teleportation gadget that implements the gate up to a Pauli correction $P_U[m]$, determined by the measurement outcome $m$, (b) because the input to the Bell measurements (shown in orange) is causally after the output of the recompiled unitary, the initial $1D$-local circuit can be unfolded in the $(x,y)$ plane by replacing qubit reinitializations with new qubits in space, and
connecting the output of one gate-teleportation gadget to the input of another. The resulting $2D$-local circuit $\mathsf{C}^{\mathsf{FT}}_{\mathsf{2D}}$ has constant depth corresponding to the gate-teleportation gadget circuit, its initial state preparation and measurements. Wires in the $(x,y)$ plane can equivalently be implemented using qubits on a $2D$ lattice and gates acting over bounded distances.
}
\label{fig:unravelling}
\end{figure}

We remark that the postprocessing map $\post$ in Lemma~\ref{lem:2DFT} can be computed deterministically in polynomial time.
\begin{proof}
The fault-tolerance compilation follows building blocks from~\cite{harleykoenig2026fault}.
We first apply the $1D$ fault-tolerance construction described in~\cite[Section~5.4]{harleykoenig2026fault}, obtained from the bilinear-array scheme of~\cite{stephens2007universal} (see also~\cite{AharonovBenOr}). It takes as input
a $1D$-local Clifford circuit~$\mathsf{C}_{\mathsf{1D}}$
and produces a recompiled $1D$-local circuit~$\mathsf{C}^{\mathsf{FT}}_{\mathsf{1D}}$ with certain fault-tolerance properties. In more detail, the construction uses $L$ (to be chosen later) levels of concatenation of the Steane code~\cite{steane1996error}, with physical operations restricted to single- and two-qubit operations on a line. 
Each qubit of the ideal Clifford circuit~$\mathsf{C}_{\mathsf{1D}}$ is encoded in a block of $7^L$ physical data qubits, and each ideal circuit location of~$\mathsf{C}_{\mathsf{1D}}$ (an initial state preparation, gate, or measurement) is replaced by its corresponding fault-tolerant gadget. The error-correction steps within each gadget use ancillary measurements and apply Pauli corrections determined solely by measurement outcomes within that gadget. Each fault-tolerant gadget contains at most $\exp(O(L))$ physical locations. The resulting circuit~$\mathsf{C}^{\mathsf{FT}}_{\mathsf{1D}}$ fault-tolerantly implements~$\mathsf{C}_{\mathsf{1D}}$ by $1D$-local operations, at the cost of an exponential depth- and qubit-overhead in~$L$.
It is adaptive, i.e., uses mid-circuit measurement results to control intermediate Pauli correction steps inside individual fault-tolerance gadgets.

We then apply the $1D$-to-$2D$ compilation of~\cite[Sections~7.2--7.3]{harleykoenig2026fault} to the circuit~$\mathsf{C}^{\mathsf{FT}}_{\mathsf{1D}}$. This construction uses gate teleportation~\cite{gottesman_demonstrating_1999,terhal2004adaptive}  to execute Cliffords of  the (adaptive) $1D$-local  circuit~$\mathsf{C}^{\mathsf{FT}}_{\mathsf{1D}}$ by unfolding. The result is a $1D$-local circuit~$\mathsf{C}_{\mathsf{1D},*}^{\mathsf{FT}}$
with comparable depth, size and fault-tolerance properties  as~$\mathsf{C}^{\mathsf{FT}}_{\mathsf{1D}}$, but the additional property that every qubit is active only for short (constant) time periods before being measured.

In a next step, we parallelize the circuit~$\mathsf{C}_{\mathsf{1D},*}^{\mathsf{FT}}$ by unfolding its time direction into a second spatial dimension. 
This results in a non-adaptive $2D$-local constant-depth circuit~$\mathsf{C}^{\mathsf{FT}}_{\mathsf{2D}}$ which implements~$\mathsf{C}_{\mathsf{1D},*}^{\mathsf{FT}}$ up to efficient classical post-processing. To construct this circuit, we first remove the
adaptivity (originating both from the gate-teleportation gadgets and the quantum error-correction cycles): this is achieved  by replacing the adaptive Pauli correction gates by identities (their removal will be accounted for subsequently by a postprocessing function, which  propagates the corresponding Pauli frame "in software" to the measurement results). We then unfold time in space: whenever
the circuit~$\mathsf{C}_{\mathsf{1D},*}^{\mathsf{FT}}$ reinitializes a qubit that has been measured out, we replace that qubit by a newly introduced qubit; this leads to a $2D$ array of qubits.
 This unfolding of the  time direction in space as illustrated in Fig.~\ref{fig:unravelling}. 

 In summary, we have obtained a 
 circuit~$\mathsf{C}^{\mathsf{FT}}_{\mathsf{2D}}$ which is  non-adaptive, $2D$-local and has constant depth. Since the circuit is (a classically controlled) Clifford, all Pauli corrections previously removed can nevertheless still be determined from the output of the circuit, and commuted to a correction on the final physical measurements. Composing these maps, this enables recovering the output of the initial circuit~$\mathsf{C}_{\mathsf{1D}}$. What is gained in this process is fault-tolerance and constant depth;  the price is $2D$- instead of $1D$-locality, and an overhead in qubits determined by the depth of the original circuit~$\mathsf{C}_{\mathsf{1D}}$.

It remains to specify the postprocessing map. Let $z=(z_j)_{j\in[n_q]}$ collect the final physical computational basis readouts of
the output blocks with $z_j\in\{0,1\}^{7^L}$. Here and below, we suppress the dependence of~$n_q$ on~$n$. Let $r$ denote the collection of all remaining
measurement outcomes (obtained within fault-tolerance gadgets and mid-circuit gate teleportation steps). We write $\FToutput=(r,z)$ for the complete measurement record. From $(x,r)$, we compute the accumulated Pauli frame by propagating the omitted corrections in the causal order inherited from the $1D$-local circuit~~$\mathsf{C}_{\mathsf{1D}}$, conjugating them by Clifford gates and adjusting each measurement outcome according to the current Pauli frame when incorporating subsequent corrections.
Let $a_j(x,r)$ denote the $X$-component of the accumulated frame on output block $j\in[n_q]$. We define the postprocessing map pointwise as
\begin{align}
    \big[\post(x,(r,z))\big]_j &:= \mathsf{Read}^{(L)}\bigl(a_j(x,r) \oplus z_j\bigr) \ ,\label{eq:postj}
\end{align}
where $\mathsf{Read}^{(L)}$ is the logical readout map for the level-$L$ concatenated Steane code. This is the post-processing map that needs to be applied to the physical single-qubit measurement results in order to fault-tolerantly emulate a logical $Z$ measurement.  The accumulated frame $a_j(x,r)$ and the readout map~$\mathsf{Read}^{(L)}$ can be evaluated in time polynomial in the circuit size of the compiled circuit~$\mathsf{C}^{\mathsf{FT}}_{\mathsf{2D}}$.

As argued in~\cite[Theorem~5.4.1]{harleykoenig2026fault}, the construction of~$\mathsf{C}^{\mathsf{FT}}_{\mathsf{2D}}$ ensures that the recovered output obtained in this way inherits (up to a controlled fidelity loss) the noise-tolerance guarantees of the fault-tolerance construction 
given by the circuit~$\mathsf{C}^{\mathsf{FT}}_{\mathsf{1D}}$. This is because all applied transformations are local in space-time, implying that error propagation is limited.  More precisely, there is a constant threshold $p_{\rm th}>0$ such that, for every constant $\varepsilon>0$ one can choose $L=O(\log\log (Dn_q/\varepsilon))$ so that the following holds. For every  input $x$ and every local stochastic Pauli noise of strength below $p_{\rm th}$, the distribution of $\post(x,\FToutput)$ is within $\varepsilon$ of the ideal distribution $p(\cdot|x)$ in total variation distance. With this choice of $L$, the $\Theta(n_qD)$  circuit locations of the ideal circuit~$\mathsf{C}_{\mathsf{1D}}$
yield a compiled circuit on at most $n_qD\exp(O(L))=n_qD \poly(\log(n_qD))$ qubits for fixed $\varepsilon$.

Next, we characterize a relevant locality property  of the postprocessing map $\post$. 
We show that this function has the property 
that each output bit depends on at most $k$ input bits. We call this property $k$-locality. It implies that the function can be implemented in  a nonuniform way by an $\AC^0$-circuit of bounded size: Each output bit can be written as a Boolean formula in disjunctive normal form  from its truth table. This formula is a constant depth $\AC^0$-circuit of size~$2^{O(k)}$. In our application, the postprocessing function has $n_q$~output bits; evaluating all output bits in parallel leads to a constant depth $\AC^0$-circuit of size $O(n_q 2^k)$.

To analyze the locality of the postprocessing map, fix a logical output bit $y_j=[\mathsf{post}(x,(r,z))]_j$. We associate each recompiled gadget, including each final logical-readout gadget, with its corresponding location in the original ideal $1D$ circuit~$\mathsf{C}_{\mathsf{1D}}$. We then trace the dependencies of $y_j$ backward through the associated local recovery updates, following these  locations in the ideal circuit~$\mathsf{C}_{\mathsf{1D}}$ in reverse temporal order. We call the circuit locations of~$\mathsf{C}_{\mathsf{1D}}$ visited in manner the (logical) backward lightcone of~$y_j$, illustrated in Fig.~\ref{fig:lightcone}, and denote it by $\mathcal{L}^{\leftarrow}(y_j)$. 
Because $\mathsf{C}_{\mathsf{1D}}$ is a $1D$-local depth-$D$ circuit,
the size~$|\mathcal{L}^{\leftarrow}(y_j)|$ of such a logical backward lightcone is bounded by~$O(D^2)$.

The size of the logical backward lightcone allows us to give an upper bound on the total number of  measurement results of $(r,z)$ and input bits $x$ the  output bit~$y_j$ can depend on. The key point is that the level-$L$ implementation of each fault-tolerance gadget (including state preparation, unitaries and readout)
produces at most $\exp(O(L))$~bits of measurement output, and only depends on an $O(1)$ number input bits (of~$x$) by assumption on~$\mathsf{C}_{\mathsf{1D}}$. In summary, we conclude the post-processing function in Eq.~\eqref{eq:postj}  is $k$-local with $k$ bounded by
\begin{align}
    k&\leq \left(\max_{j\in [n_q]}|\mathcal{L}^{\leftarrow}(y_j)|\right)\cdot  \exp(O(L))\nonumber \\
    &= D^2\poly(\log(n_qD))\nonumber .
\end{align}
\begin{figure}
\centering
\includegraphics[width=\columnwidth]{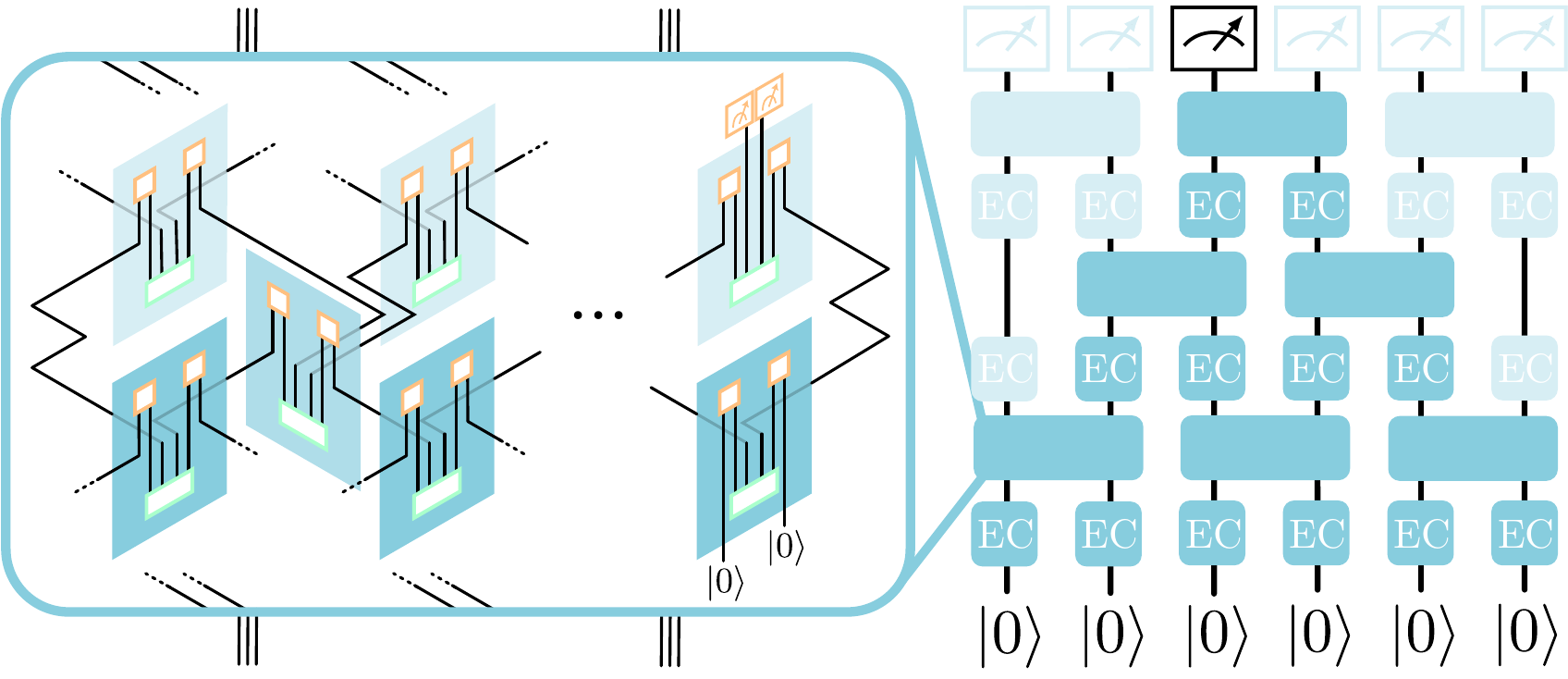}
\caption{Locality of the postprocessing map. Each output block logical measurement is obtained in a way that only depends on its backward lightcone  (shown in dark cyan) — that is, on the corresponding region of the underlying $1D$-local circuit $\mathsf{C}_{\mathsf{1D}}$. A fault-tolerant gadget within that region contributes two types of outcomes (shown in orange): the outcomes of the physical teleportation-based gadgets, and the outcomes of the stabilizer measurements. Each gate in fault-tolerant gadget can also depend on constantly many input bits. The locality of the postprocessing map can thus be bounded by the size of the logical backward lightcone multiplied by the maximal number of measurement outcomes produced and inputs used within any single fault-tolerant gadget.}
\label{fig:lightcone}
\end{figure}
We can thus implement this map by a non-uniform $\AC^0$-circuit of size at most $\exp(D^2\poly(\log(n_qD)))$, where we absorbed the prefactor $n_q$ to the exponent. This is the claim.
\end{proof}
\enlargethispage{2em}
\textit{Quantum advantage with noisy shallow circuits.---}
We are now ready to prove Theorem~\ref{thm:maintheorem}. 
We follow the construction outlined above. We use the so-called single-qubit gate teleportation problem (or its amplified version with its $1D$ circuit flattened from a ring to a line) 
to instantiate our construction. The latter  was shown to be hard for  $\AC^0$-circuits~\cite{caha_3d-local_2026,caha2024single}, while solvable with certainty by a $1D$-local classically controlled Clifford circuit~$\mathsf{C}_{\mathsf{1D}}^{\mathsf{Telep}_n}$. We apply Lemma~\ref{lem:2DFT} to this circuit. Following Ref.~\cite{bravyi2020quantum}, we fold
the classical postprocessing map into the definition of a fault-tolerant relation. For each input, an output string will satisfy the new relation precisely when its recovered output satisfies the original relation.

\begin{proof}[Proof of Theorem~\ref{thm:maintheorem}] 
The result of Ref.~\cite{caha_3d-local_2026} gives a relation $R_n$ on $n$-bit inputs that is solved with certainty by a polynomial-size, constant-depth, classically controlled $1D$-local Clifford circuit. On the other hand, there is a constant~$\beta>0$ such that for all sufficiently large $n$, any depth-$d$ $\AC^0$-circuit solving $R_n$ with average success probability at least $1/3$ on uniformly random inputs has size at least $\exp(n^{1/(\beta d)})$.

We first apply Lemma~\ref{lem:2DFT} with $\varepsilon=1/3$ to
the circuit $\mathsf{C}_{\mathsf{1D}}^{\mathsf{Telep}_n}$. This yields a $2D$-local constant-depth prepare-and-measure circuit~$\mathsf{C}_{\mathsf{2D}}^{\mathsf{FT},\mathsf{Telep}_n}$ together with a postprocessing map $\post$. 

To avoid classical fan-out within the circuit, we introduce a preprocessing map $\prep$ which is an $\AC^0$-circuit. In our case it simply supplies each classically controlled physical gate in the fault-tolerant gadgets with its own copies of the required input bits. With this, we can define the derived fault-tolerant relation~$R_n^{\mathsf{FT}}$  as follows.
We declare $(\prep(x),\FToutput)\in R_n^{\mathsf{FT}}$ if and only if $(x,\post(x,\FToutput))\in R_n$. Note that both relations $R_n$ and $R_n^{\mathsf{FT}}$ are not defined for all integers $n>0$ bot for infinitely many of them.

Since the ideal quantum circuit~$\mathsf{C}_{\mathsf{1D}}^{\mathsf{Telep}_n}$ solves $R_{n}$ with certainty, the total variation bound implies that, under any local stochastic Pauli noise of strength at most $p_{\rm th}$, the compiled circuit~$\mathsf{C}_{\mathsf{2D}}^{\mathsf{FT},\mathsf{Telep}_n}$ solves $R_n^{\mathsf{FT}}$ with probability at least $2/3$  for every input from $\pi(R_n)$. This proves~\ref{it:mainone}.

Suppose a depth-$D$, size-$s$ $\AC^0$-circuit solves $R_n^{\mathsf{FT}}$ with average success probability at least $1/3$. Composing it with the preprocessing map $\prep$ and the postprocessing map $\post$ of Lemma~\ref{lem:2DFT} gives an $\AC^0$-circuit for $R_n$ with the same average success probability, depth $D+O(1)$ (since both $\prep$ and $\post$ are of constant depth), and size $s+\poly(n)+\exp(\poly\log(n))$
(where we added the sizes of $\prep$ and $\post$). 
The cited classical hardness result from~\cite{caha_3d-local_2026} therefore implies $s\ge\exp(n^{1/(\alpha D)})$ for a sufficiently large constant $\alpha$, every fixed $D$, and sufficiently large $n$. This proves~\ref{it:maintwo}.
\end{proof}

\textit{Discussion.---} 
We have identified an example of a computational  task which is solvable by noisy shallow quantum circuits with nearest-neighbor gates on a plane, but is beyond the computational reach of classical $\AC^0$-circuits. This noisy planar setting is of particular practical relevance as it matches the $2D$-connectivity of leading hardware platforms such as superconducting qubits. The established complexity-theoretic separation is unconditional but modest: it does not provide insight into the (potential) differences between efficient (i.e., polynomial-time) classical and quantum computation.
However, our proposal comes with other attractive features. It incorporates noise-tolerance without requiring a scalable fault-tolerant quantum computer. It also provides a means of validating quantum hardware because the correctness of the quantum circuit output can be verified efficiently.
In contrast, commonly considered sampling problems in the area of quantum supremacy~\cite{aaronson2011computational,bouland2019complexity,bremner2016average}
require exponentially many samples to verify~\cite{hangleiter2019sample},  and rely on complexity-theoretic assumptions. Furthermore, unless equipped with fault-tolerance mechanisms~\cite{paletta2024robust,hangleiter2025fault}, several of them are known to become classically simulable under noise, precluding the possibility of a significant computational advantage~\cite{aharonov2023polynomial,schuster2025polynomial,rajakumar2025polynomial}.

It is natural to wonder whether the experimental requirements to observe a quantum advantage can be reduced further beyond our proposal, e.g., using a linear array of qubits. In concurrent work~\cite{toappear}, it is shown that an advantage of noisy geometrically local quantum circuits akin to the one expressed Theorem~\ref{thm:maintheorem} against $\AC^0$-circuits is impossible in $1D$. This means that our proposal for a quantum advantage with noisy shallow circuits in the plane, i.e., in $2D$, is optimal in terms of spatial dimensionality.

Another direction of study concerns quantum advantages for other tasks. In particular it is currently not known whether unconditional advantage in sampling probability distribution~\cite{grier2025quantumadvantagesamplingshallow,watts2026unconditional} can be made fault-tolerant, even allowing for non-local (quantum) connectivity. Efficient verifiability is also a concern for such tasks.

\noindent
\textit{Acknowledgment.---} RK thanks Dylan Harley for useful discussions at an early stage of this project.
LC, RK, and LP acknowledge funding from the European Research Council under Grant Agreement No.~101001976 (project EQUIPTNT) and from the Munich Quantum Valley, which is supported by the Bavarian state government with funds from the Hightech Agenda Bayern Plus. RK would like to thank the Isaac Newton Institute for Mathematical Sciences, Cambridge, for support and hospitality during the programme ``Mathematics of many-body entanglement''
where work on this paper was undertaken; this work was supported by EPSRC grant no EP/Z000580/1.

\noindent
\textit{AI statement.---} The main conceptual contributions were developed by the authors. Claude Fable 5 (Anthropic), GPT-5.6 Sol and GPT-6 Astra (OpenAI) were used in the writing process.

\bibliography{q}
\end{document}